\newtheorem{theorem}{Theorem}
\newtheorem{definition}{Definition}
\newtheorem{remark}{Remark}
\newtheorem{example}{Example}
\title{\LARGE \bf
Geodesic Density Tracking with Applications to Data Driven Modeling
}
\author{Abhishek Halder, Raktim Bhattacharya
\thanks{Abhishek Halder and Raktim Bhattacharya are with the Department of Aerospace Engineering, Texas A\&M University,
        College Station, TX 77843, USA,
        {\tt\small \{ahalder,raktim\}@tamu.edu}}%
}
\begin{document}

\maketitle
\thispagestyle{empty}
\pagestyle{empty}

\begin{abstract}
Many problems in dynamic data driven modeling deals with distributed rather than lumped observations. In this paper, we show that the Monge-Kantorovich optimal transport theory provides a unifying framework to tackle such problems in the systems-control parlance. Specifically, given distributional measurements at arbitrary instances of measurement availability, we show how to derive dynamical systems that interpolate the observed distributions along the geodesics. We demonstrate the framework in the context of three specific problems: (i) \emph{finding a feedback control} to track observed ensembles over finite-horizon, (ii) \emph{finding a model} whose prediction matches the observed distributional data, and (iii) \emph{refining a baseline model} that results a distribution-level prediction-observation mismatch. We emphasize how the three problems can be posed as variants of the optimal transport problem, but lead to different types of numerical methods depending on the problem context. Several examples are given to elucidate the ideas.
\end{abstract}


\section{Introduction}
In traditional systems theory, modeling and control synthesis assumes the availability of measurements in the form of vector \emph{signals} or \emph{trajectories} observed over time. However, in many applications, observations are not lumped variables, rather they are distributed over spatial dimensions. For example, in Nuclear Magnetic Resonance (NMR) spectroscopy and Imaging (MRI) applications, the measurement variable is magnetization distribution \cite{LiKhaneja2009}, since sensing individual magnetization states of the order of Avogadro number $6 \times 10^{23}$, remains a technological limitation. On the other hand, in process industry applications like paper-making \cite{Wang1999,WangBakiKabore2001}, measurement and control of distributions are motivated by the design choice of tracking desired fibre length and filler size distributions. Similar examples can be found in biological systems \cite{BrownMoehlisHolmes2004}.

Another motivation to consider modeling, identification and control problems in the distributional setting, comes from the recent proliferation of cyberphysical systems. The tight integration of control, communication and computation has resulted information deluge, popularly termed as ``big data" \cite{BigDataRef}. With the abundance of data, it becomes imperative to seek constructive algorithms that can lead to better phenomenological models or better controllers, specially in the presence of parametric uncertainties and lack of understanding of the first-principle physics. The objective of this paper is to introduce a framework, grounded on the theory of optimal transport \cite{Villani2003}, to generate models that track distributional observations over finite horizon.

\subsection{Nomenclature and the Interpretation of Probability}
We use the terms ``distribution" and ``ensemble" interchangeably with the common meaning of the availability of a multitude of real measurements at a fixed time. Also, we assume that the underlying true dynamics generating the measurements is smooth enough \cite{DiPernaLions1989} so that the realizations are absolutely continuous, and hence we can talk about ``densities" in lieu of ``distributions". The optimal transport framework described in this paper will work even if this assumption is violated \cite{Villani2003}, namely if the distributions exist but the densities don't.

Notice that the observations are in distributional or density level, does not \emph{necessarily} imply that the underlying state dynamics is governed by a partial differential equation (PDE) or stochastic differential equation (SDE). Indeed, the density may arise from the parametric and initial condition uncertainties, although the underlying state dynamics may be deterministic, governed by an (unknown) ordinary differential equation (ODE). In this sense, the term ``density" refers to the concentration of trajectories, and since the state space mass is preserved under the action of the flow, one can interpret \cite{LasotaMackey} the trajectory concentration as \emph{probability density} in the \emph{sense of propensity} \cite{JCWillems2010}. On the other hand, it may indeed be the case that the underlying true dynamics generating the observations, is governed by a SDE, naturally giving rise to probability densities, even if the initial conditions and parameters are fixed. Also, in cases where the observable is naturally distributed over spatial dimensions (e.g. a color image), then one can suitably normalize the data to enable our density-based optimal transport framework. Hence, without loss of generality, we term the observations as \emph{probability density functions} (PDFs).

\subsection{Related Work}
The idea of PDF control is not new in the control literature. Previous works like \cite{ForbesACC2003, ZhuPEM2012, GuoWangAutomatica2005} dealt with \emph{asymptotic} density shaping, meaning the feedback controls were found so that the desired PDF coincides with the stationary PDF of the closed-loop dynamics. Similar ideas predate in terms of covariance control \cite{SkeltonCDC1985}. The framework presented in this paper differs from the existing literature in that we track the distributions observed over \emph{finite} horizons, and the time instances of distributional measurement availability \emph{need not be equi-spaced}. The question then becomes: \emph{``over any given horizon, what needs to be done at the realization level trajectory dynamics, such that we track the observations at the ensemble level in some optimal sense?"} The optimal transport theory allows us to achieve finite-time distributional tracking while guaranteeing that \emph{minimum amount of work is done over each horizon}.

\subsection{Notations}
Most notations are standard. The symbol $\sharp$ denotes push-forward of a probability measure. $\ker\left(.\right)$ and $\text{Im}\left(.\right)$ refer to the kernel and image of a linear operator, respectively. The symbol ``$\text{Id}$" denotes
identity vector map of appropriate dimension. The notation $x \sim \rho$ means that the random vector $x$ has the joint PDF $\rho\left(x\right)$. Furthermore, $\nabla$ denotes the gradient operator with respect to spatial variables, and $\text{det}\left(\cdot\right)$ refers to determinant of a matrix. The symbol $\text{Hess}\left(\cdot\right)$ stands for the Hessian. Unless otherwise specified, the superscript $^{\star}$ refers to optimality, while the superscript $^{\dagger}$ refers to the Moore-Penrose pseudo-inverse of a matrix. The notation $I$ denotes the identity matrix, $\mathcal{N}\left(\mu,\Sigma\right)$ denotes Gaussian PDF with mean $\mu$ and covariance $\Sigma$, $\mathcal{U}\left(\cdot\right)$ denotes uniform PDF.

\begin{figure}[tb]
\begin{center}
 \includegraphics[scale=0.77]{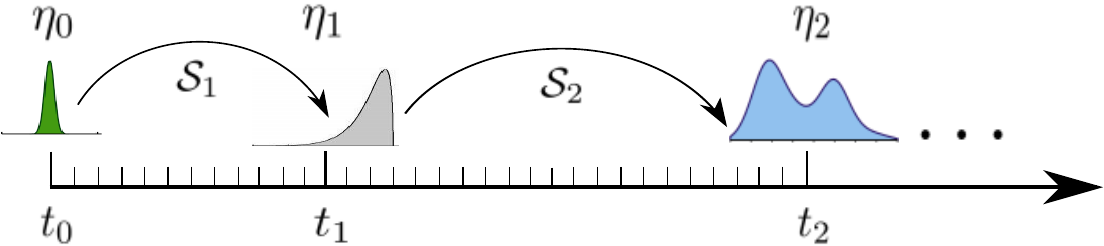}
 \end{center}
 \vspace*{-0.1in}
\caption{The schematic of Problem 1, where a sequence of joint PDFs $\eta_{j}$ of the output vector $y_{j}$ are given at times $t_{j}$, $j=0,1,\hdots,M$. The objective is to find dynamical systems $S_{j+1} : y_{j} \mapsto y_{j+1}$, over each horizon $[t_{j}, t_{j+1})$.}
\label{OverallSchematic}
\vspace*{-0.1in}
\end{figure}
\vspace*{-0.05in}

\subsection{Problem Formulation}
We consider a sequence of time instances $\{t_{j}\}_{j=0}^{M}$, when the measurement vector $y\left(t\right) \in \mathbb{R}^{d}$ is recorded as a sequence of PDFs $\{\eta_{j} \triangleq \eta\left(y\left(t_{j}\right), t_{j}\right)\}_{j=0}^{M}$. If we introduce $y_{j} \triangleq y\left(t_{j}\right)$, then we have $y_{j} \sim \eta_{j}$. The general problem statement can now be stated as follows (see Fig. \ref{OverallSchematic}).
\begin{framed}
\noindent%
\textbf{Problem 1}\\
\noindent%
For each interval $[t_{j}, t_{j+1})$, find a dynamical system $\mathcal{S}_{j+1} : y_{j} \mapsto y_{j+1}$ that minimizes the total transportation cost
\begin{eqnarray}
\displaystyle\int_{\mathbb{R}^{2d}} \parallel y_{j+1} - y_{j} \parallel_{\ell_{2}\left(\mathbb{R}^{d}\right)}^{2} \: \rho\left(y_{j}, y_{j+1}\right) \: dy_{j} dy_{j+1},
\label{TransportCost}
\end{eqnarray}
over all transportation policy $\rho\left(y_{j},y_{j+1}\right)$, such that $y_{j} \sim \eta_{j}$, $y_{j+1} \sim \eta_{j+1}$, $j = 0, 1, \hdots, M$.
\end{framed}
To understand the problem, let $dm_{j \rightarrow j+1} \triangleq \rho\left(y_{j}, y_{j+1}\right) \: dy_{j} dy_{j+1}$, which is nothing but the differential mass over the product space $\mathbb{R}^{2d}$. If the cost per unit mass equals squared Euclidean distance $\parallel y_{j+1} - y_{j} \parallel_{\ell_{2}\left(\mathbb{R}^{d}\right)}^{2}$, then (\ref{TransportCost}) denotes the total cost to transport the density $\eta_{j}$ to $\eta_{j+1}$, while preserving mass (since both $\eta_{j}$ and $\eta_{j+1}$ are PDFs). Notice that the total cost depends on the choice of the PDF $\rho\left(y_{j},y_{j+1}\right)$ supported over $\mathbb{R}^{2d}$, that dictates the transportation policy, and hence $\mathcal{S}_{j+1}$. However, finding the joint PDF $\rho\left(y_{j},y_{j+1}\right)$ with given marginals $\eta_{j}$ and $\eta_{j+1}$, is not unique. Thus, we seek to find that transportation policy or joint PDF $\rho\left(y_{j},y_{j+1}\right)$, which is the minimizer of the total transportation cost (\ref{TransportCost}).

Notice that \textbf{Problem 1} is rather generic in the sense, it does not impose any structural constraint on the dynamical system $\mathcal{S}_{j+1}$ to be determined. In the context of dynamic data driven modeling, we are interested in three variants of \textbf{Problem 1}, stated next.
\begin{framed}
\noindent%
\textbf{Problem 1.1 (Finite Horizon Feedback Control of Output PDFs)}\\
\noindent%
Solve Problem 1 with \emph{pre-specified control structure} on $\mathcal{S}_{j+1}$ (e.g. affine or non-affine, linear or nonlinear) by finding the state or output feedback control $u\left(\cdot\right)$.
\end{framed}
\begin{remark}
Notice that \textbf{Problem 1.1} is more specific than \textbf{Problem 1} in the sense that although $\mathcal{S}_{j+1}$ could be found by solving Problem 1, additional conditions may be necessary for feedback control $u\left(\cdot\right)$ to exist that satisfies the desired control structure.
\end{remark}
\begin{framed}
\noindent%
\textbf{Problem 1.2 (Data Driven $d$\textsuperscript{th} Order Modeling)}\\
\noindent%
Solve Problem 1 with no \emph{a priori} knowledge about $\mathcal{S}_{j+1}$ other than \emph{the type of temporal dependence} in $[t_{j},t_{j+1})$ (e.g. continuous-time flow or discrete-time map).
\end{framed}
\begin{remark}
The order of modeling/identification is same as the dimension ($d$) of the output vector. This is helpful in practice since $d$ is usually much less than the dimension of the \emph{true} state space. In other words, the model (ODE or map) will be over the outputs, thus naturally resulting reduced order models.
\end{remark}
\begin{framed}
\noindent%
\textbf{Problem 1.3 (Model Refinement)}\\
\noindent%
Solve Problem 1 with $\mathcal{S}_{j+1}$ as an instantaneous map, the source PDF $\eta_{j}$ as the nominal model prediction, and the target PDF $\eta_{j+1}$ as the true measurement. Here $\eta_{j}$ and $\eta_{j+1}$ are given at the same physical time. The refined model is a composition of $\mathcal{S}_{j+1}$ with the output map of the model.
\end{framed}
\begin{remark}
In \textbf{Problem 1.3}, $t \in [t_{j},t_{j+1})$ refers to a synthetic notion of time. In the context of refinement problem, the physical time stays zero order hold for each time-horizon in Fig. \ref{OverallSchematic}.
\end{remark}

\subsection{Organization of the Paper}
This paper is structured as follows. In Section II, the necessary background on optimal transport theory is given. In particular, we connect different formulations of optimal transport with the different dynamic data driven modeling problems described in the previous subsection, and show how they lead to different types of numerical solutions. Next, using the ideas from Section II, we discuss \textbf{Problem} \textbf{1.1}, \textbf{1.2} and \textbf{1.3} in Sections III, IV and V, respectively. To illustrate the solution methodology, analytical and numerical results are provided for example problems. Section VI concludes the paper.


\section{Background on Optimal Transport}

\subsection{Primal Formulation}
The optimal transport theory originated in 1781 when Gaspard Monge considered \cite{Monge1781} the problem of moving a pile of soil from an excavation to another site that entails minimum work. This idea went mostly unnoticed for 160 years until Leonid Kantorovich provided a modern treatment \cite{Kantorovich1942} of this subject in 1942 (the English translation \cite{Kantorovich1958} appeared in 1958), which eventually led to the Nobel prize in economics in 1975. In the theory of Monge-Kantorovich optimal transport, one defines a distance, called \emph{Wasserstein distance}, between two \emph{given} PDFs $\rho$ and $\widehat{\rho}$, that measures the \emph{shape difference} between them.
\begin{definition} (\textbf{Wasserstein distance})
The $L_{2}$ Wasserstein distance of order 2 (henceforth referred simply as \emph{Wasserstein distance} $W$), between two $d$-dimensional random vectors $y \sim \rho$, and $\widehat{y} \sim \widehat{\rho}$, is defined as
\begin{eqnarray}
W\left(\rho,\widehat{\rho}\right) \triangleq \left(\underset{\varrho \in \mathcal{P}_{2}\left(\rho,\widehat{\rho}\right)}{\text{inf}} \mathbb{E}\left[ \parallel y - \widehat{y} \parallel_{\ell_{2}\left(\mathbb{R}^{d}\right)}^{2} \right]\right)^{\frac{1}{2}},
\label{WassDefn}
\end{eqnarray}
where the $\mathbb{E}\left[\cdot\right]$ is taken with respect to the joint PDF $\varrho\left(y,\widehat{y}\right)$ that makes the cost function achieve the infimum. The symbol $\mathcal{P}_{2}\left(\rho,\widehat{\rho}\right)$ denotes the set of all joint PDFs supported over $\mathbb{R}^{2d}$, having finite second moments, whose first marginal is $\rho$, and second marginal is $\widehat{\rho}$.
\end{definition}
\begin{remark}
It can be shown \cite{Rachev1991} that $W$ defines a metric on the manifold of PDFs, and remains well defined between the \emph{distributions} even though the random vectors $y$ and $\widehat{y}$ are not absolutely continuous (i.e. $\rho$ and $\widehat{\rho}$ don't exist).
\end{remark}
\begin{remark}
From (\ref{WassDefn}), notice that $W^{2}$ is nothing but the \emph{minimum mean square error (MMSE)} between two random vectors, and is equal to the \emph{least transportation cost} (\ref{TransportCost}).
\end{remark}
\begin{remark}
In (\ref{WassDefn}), the cost function
\begin{eqnarray}
J_{1}\left(\varrho\right) \triangleq \int_{\mathbb{R}^{2d}} \parallel y - \widehat{y}\parallel_{2}^{2} \varrho\left(y,\widehat{y}\right) dyd\widehat{y}
\label{WassCost}
\end{eqnarray}
and the constraints: $\int_{\mathbb{R}^{d}} \varrho\left(y,\widehat{y}\right) \: d\widehat{y} = \rho\left(y\right)$, $\int_{\mathbb{R}^{d}} \varrho\left(y,\widehat{y}\right) \: dy = \widehat{\rho}\left(\widehat{y}\right)$, $\varrho\left(y,\widehat{y}\right) \geq 0$, are linear in the function $\varrho$. Hence, computing $W$ from (\ref{WassDefn}) requires solving an \emph{infinite-dimensional linear program}. As stated in Section I.D, the infimizer $\varrho^{\star}$ results the optimal transportation plan.
\end{remark}
\begin{remark}
The infinite dimensional LP (\ref{WassDefn}) can be solved by directly discretizing the problem in terms of the samples of the constituent PDFs (see \cite{HalderBhattacharya2011, HalderBhattacharya2012}). As shown in the first row of Table \ref{TheBigTable}, this results a large scale finite dimensional LP, whose solution provides a consistent approximation \cite{Villani2003} of the true solution (of the infinite dimensional LP).
\end{remark}

\subsection{Variational Formulation for Optimal Transport Map}
Instead of solving (\ref{WassDefn}), one could directly solve for the the optimal transport map $\beta : \mathbb{R}^{d} \mapsto \mathbb{R}^{d}$, that satisfies $y = \beta\left(\widehat{y}\right)$, by solving
{\small{\begin{eqnarray}
\underset{\beta\left(\cdot\right)}{\text{inf}} \underbrace{\displaystyle\int_{\mathbb{R}^{d}} \parallel \beta\left(\widehat{y}\right) - \widehat{y} \parallel_{\ell_{2}\left(\mathbb{R}^{d}\right)}^{2} \: \widehat{\rho}\left(\widehat{y}\right) \: d\widehat{y}}_{J_{2}\left(\beta\right)}, \; \text{subject to} \; \rho = \beta \:\sharp\: \widehat{\rho}.
\label{BetaVariational}
\end{eqnarray}}}
\begin{remark}
Since there are infinite ways to morph $\widehat{\rho}$ to $\rho$, (\ref{BetaVariational}) looks for an \emph{optimal push-forward map} $\beta^{\star}\left(\cdot\right)$ that would require minimum amount of transport effort among all possible push-forward maps $\beta\left(\cdot\right)$. Then the map $\beta^{\star}\left(\cdot\right)$ characterizes the optimal transport of \textbf{Problem 1}.
\end{remark}
\begin{remark}
In a seminal paper \cite{brenier1991polar}, Brenier proved the existence and uniqueness of $\beta^{\star}\left(\cdot\right)$. Further, his \emph{polar factorization theorem} \cite{brenier1991polar} proved that the unique vector function $\beta^{\star}\left(\cdot\right)$ can be written as a gradient of a scalar function, i.e. $\beta^{\star} = \nabla \psi$. Furthermore, the scalar function $\psi$ is convex. The optimal transport map $\beta^{\star}$ is also known as the \emph{Brenier map}.
\label{BrenierPolarFactorizationRemark}
\end{remark}
\begin{remark}
Although the cost function in (\ref{BetaVariational}) is quadratic in $\beta\left(\cdot\right)$, the push-forward constraint is nonlinear and non-convex in $\beta\left(\cdot\right)$. Thus, a direct numerical optimization is not straight-forward. As shown in the second row of Table \ref{TheBigTable}, \cite{TannenbaumSISC2010} used the fact that $\beta^{\star}\left(\cdot\right)$ is curl-free, to formulate a \emph{regularized} sequential quadratic program (SQP) to solve (\ref{BetaVariational}) as
\begin{eqnarray}
\underset{\beta\left(\cdot\right)}{\inf} \: \widetilde{J}_{2}\left(\beta\right), \quad \text{subject to} \quad c\left(\beta\right) = 0,
\end{eqnarray}
where $\widetilde{J}_{2}\left(\beta\right) \triangleq J_{2}\left(\beta\right) + \alpha \displaystyle\int_{\mathbb{R}^{d}} \parallel \nabla \times \beta \parallel_{\ell_{2}\left(\mathbb{R}^{d}\right)}^{2} \: d\widehat{y}$, and $\alpha > 0$ is a regularization parameter. The constraint $c\left(\beta\right) \triangleq \lvert \text{det}\left(\nabla \beta\right) \rvert \, \rho \circ \beta\left(\widehat{y}\right) - \widehat{\rho}\left(\widehat{y}\right)$.
\end{remark}

\subsection{PDE Formulation for Optimal Transport Map}
From Remark \ref{BrenierPolarFactorizationRemark}, we can substitute $\beta = \nabla \psi$ in the push-forward constraint $c\left(\beta\right) = 0$. Then it follows that $\psi$ must solve
\begin{eqnarray}
\lvert \text{det}\left(\text{Hess}\left(\psi\left(\widehat{y}\right)\right)\right) \rvert \: \rho\left(\nabla \psi\left(\widehat{y}\right)\right) = \widehat{\rho}\left(\widehat{y}\right).
\label{MongeAmperePDE}
\end{eqnarray}
This is a second order, nonlinear, stationary, elliptic PDE, known as the \emph{Monge-Amp\`{e}re equation}, to be solved for $\psi$ as a function of $\widehat{y}$. In principle, if we can solve (\ref{MongeAmperePDE}), then $\nabla \psi$ would solve \textbf{Problem 1}. However, as mentioned in the third row of Table \ref{TheBigTable}, numerically solving the PDE (\ref{MongeAmperePDE}) remains a research challenge.

\subsection{Benamou-Brenier Space-time Variational Formulation}
Benamou and Brenier proposed \cite{benamou2000computational} a \emph{dynamic} reformulation of the \emph{static} optimization problem (\ref{BetaVariational}) by introducing a synthetic notion of time, which we denote as $s \in [0, \tau]$. Their main result is that the spatial optimization problem (\ref{BetaVariational}), is equivalent to solving the following space-time optimization problem:
\begin{eqnarray}
W^{2} = \tau \underset{\left(\varphi, v\right)}{\text{inf}} \: \displaystyle\int_{\mathbb{R}^{d}} \int_{0}^{\tau} \: \varphi\left(\widehat{y}, s\right) \parallel v\left(\widehat{y}, s\right) \parallel_{\ell_{2}\left(\mathbb{R}^{d}\right)}^{2} \: d\widehat{y}\:ds, \label{BBcost}\\
{\small{\text{subject to}}} \: \displaystyle\frac{\partial \varphi}{\partial s} + \nabla \cdot \left(\varphi v\right) = 0, \: \varphi\left(\cdot, 0\right) = \widehat{\rho}, \: \varphi\left(\cdot, \tau\right) = \rho.
\label{BenamouBrenierEulerian}
\end{eqnarray}

\begin{remark}
It is important to understand the meaning of solving the optimization problem (\ref{BBcost})-(\ref{BenamouBrenierEulerian}). Notice that the spatial and temporal integrals in the cost function can be interchanged. Thus, if we fix $s$, then the cost is the instantaneous kinetic energy of the ensemble during transport, where each sample moves according to the deterministic ODE $\displaystyle\frac{d}{ds}\widehat{y} = v\left(\widehat{y}(s), s\right)$, corresponding to the Liouville PDE \cite{HalderBhattacharyaJGCDLiouville} $\displaystyle\frac{\partial \varphi}{\partial s} + \nabla \cdot \left(\varphi v\right) = 0$, appearing in the constraint. Hence, the cost function in (\ref{BBcost}) is equal to the total kinetic energy up to time $\tau$. Consequently, $W^{2}$ equals total work done during the transport process. The optimization is over a pair of vector field $v$ and joint PDF $\varphi$, and is convex in both.
\end{remark}

\begin{remark}
It can be shown \cite{Villani2003} that the minimizing vector field $v^{\star}\left(\widehat{y}, s\right)$ in the above optimization problem, is a \emph{pressureless potential flow}. In other words, $v^{\star}\left(\widehat{y}, s\right) = \nabla \phi\left(\widehat{y}, s\right)$, where the scalar function $\phi\left(\widehat{y}, s\right)$ solves the Hamilton-Jacobi equation
\begin{eqnarray}
\displaystyle\frac{\partial \phi}{\partial s} + \displaystyle\frac{1}{2} \parallel \nabla \phi \parallel_{\ell_{2}\left(\mathbb{R}^{d}\right)}^{2} \;=\; 0.
\end{eqnarray}
\label{GradientFlow}
\end{remark}

\begin{remark}
In p. 384 of \cite{benamou2000computational}, using Legendre transform, (\ref{BBcost})-(\ref{BenamouBrenierEulerian}) was further converted to a saddle point optimization problem, which was numerically solved using the augmented Lagrangian technique \cite{fortin1983augmented}. Recently, an improved numerical method to solve (\ref{BBcost})-(\ref{BenamouBrenierEulerian}) has been proposed \cite{GabrielProxOpSplitArXiv2013} via proximal operator splitting. We will use this technique in Section IV for numerical simulations.
\end{remark}


\begin{center}
\begin{table*}[t]
\centering
{\small{
\begin{tabular}{ | c | c | c | c | }
\hline \hline
 & & & \\
Mathematical formulation & Problem type & Numerical method & Solver\\
& & & \\ \hline \hline
& & & \\
Primal formulation in $\varrho$ & Infinite dimensional LP (\ref{WassDefn}) & Direct discretization & Large scale LP solver\\
& & & (e.g. MOSEK\textsuperscript{\textregistered} used in \cite{HalderBhattacharya2011, HalderBhattacharya2012}) \\
& & & \\ \hline
& & & \\
Variational Formulation for & Quadratic cost with nonlinear & Regularized SQP & ``Discretize-then-optimize" \\
optimal transport map $\beta\left(\cdot\right)$ & non-convex constraints (\ref{BetaVariational}) & (optimality not guaranteed) & solver in \cite{TannenbaumSISC2010}\\
& & & \\ \hline
& & & \\
PDE Formulation for & Monge-Amp\`{e}re PDE (\ref{MongeAmperePDE}) & Not well studied & See review article \cite{SIAMreview2013}\\
optimal transport map $\beta\left(\cdot\right)$ &  (Second order nonlinear elliptic PDE) & & for current research status\\
& & & \\ \hline
& & & \\
Brenier-Benamou space-time & Non-smooth convex & Proximal operator splitting & Staggered grid\\
variational formulation in $\left(v, \varphi\right)$ &  optimization problem (\ref{BBcost})-(\ref{BenamouBrenierEulerian}) & & discretization in \cite{GabrielProxOpSplitArXiv2013}\\
& & & \\ \hline
\hline
\end{tabular}
}}
\vspace*{0.1in}
\caption{Different formulations of \textbf{Problem 1} in Section I.D, as described in Section II.}
\label{TheBigTable}
\end{table*}
\end{center}


\subsection{Wasserstein Geodesics on the Manifold of PDFs}
One merit of the Benamou-Brenier approach described in Section II.D is that it constructs the transportation path, which is a geodesic connecting the source and target PDFs, and yields the intermediate PDFs satisfying McCann's displacement interpolation \cite{McCann1997}. In particular, the following results \cite{Villani2003} hold.

\begin{enumerate}
\item Without loss of generality, let the synthetic time $s \in \left[0,1\right]$, i.e. in the notation of Section II.D, set $\tau = 1$. Then the Benamou-Brenier vector field constructs the geodesic curve between $\left(\widehat{\rho}, \rho\right) : s\in\left[0,1\right] \mapsto \varphi\left(s\right)$. Recall that the PDF $\widehat{\rho}$ is the source PDF, and $\rho$ is the target PDF. In other words, $\varphi$ has the variational characterization
    \begin{eqnarray}
    \varphi\left(s\right) = \underset{\varphi}{\text{argmin}} \left(1-s\right) W^{2}\left(\widehat{\rho},\varphi\right) \:+\: s W^{2}\left(\rho,\varphi\right),
    \end{eqnarray}
    and it lies on the geodesic curve connecting $\widehat{\rho}$ and $\rho$. The Wasserstein distance $W\left(\widehat{\rho},\rho\right)$ is the length of this geodesic curve on the manifold of PDFs.\\

\item As a corollary of the above result, the intermediate optimal transport map $\beta_{s}$ that satisfies $\varphi\left(s\right) = \beta_{s} \:\sharp\: \widehat{\rho}$, is obtained via \emph{linear interpolation} between the identity map $\text{Id}$ and $\beta$, i.e.
    \begin{eqnarray}
    \beta_{s} = \left(1 - s\right) \text{Id} + s \beta.
    \label{MapIsLinearInterpolation}
    \end{eqnarray}
    Also, the intermediate Wasserstein distance is obtained via \emph{linear interpolation}:
    \begin{eqnarray}
    W\left(\widehat{\rho},\varphi\left(s\right)\right) &=& s \: W\left(\widehat{\rho},\rho\right), \label{WIsLinearInterpolation1}\\
    W\left(\rho,\varphi\left(s\right)\right) &=& \left(1 - s\right) \: W\left(\widehat{\rho},\rho\right).
    \label{WIsLinearInterpolation2}
    \end{eqnarray}
    However, the intermediate PDF is obtained via \emph{nonlinear (displacement) interpolation}:
    \begin{eqnarray}
    \varphi\left(s\right) &=& \beta_{s} \: \sharp \: \widehat{\rho} = \left[\left(1 - s\right) \text{Id} + s \beta\right] \: \sharp \: \widehat{\rho}, \label{PDFIsNonlinearInterpolation1}\\
    &=& \beta_{1-s} \: \sharp \: \rho = \left[s\:\text{Id} + (1 - s) \beta\right] \: \sharp \: \rho.
    \label{PDFIsNonlinearInterpolation2}
    \end{eqnarray}
\end{enumerate}

\section{Feedback Control for Finite-Horizon Density Tracking}

In this Section, we consider \textbf{Problem 1.1} under pre-specified control structures. For brevity, we only consider the case when $\mathcal{S}_{j+1}$ has discrete-time LTI state dynamics, and the state PDFs are prescribed Gaussians.

\subsection{Linear Gaussian PDF Control in Discrete Time}
Consider a linear system $x_{j+1} = A x_{j} + B u_{j}$, $x_{j} \in \mathbb{R}^{d}, u_{j} \in \mathbb{R}^{m}$, with a sequence of Gaussian PDFs $\eta_{j} = \mathcal{N}\left(\mu_{j}, \Sigma_{j}\right)$, $j=0,1,\hdots,M$. The objective is to find state feedback $u_{j}^{\star} \triangleq u^{\star}\left(x_{j}\right)$ over each time interval $\Delta t_{j} \triangleq [t_{j}, t_{j+1})$, such that $x_{j} \sim \eta_{j} = \mathcal{N}\left(\mu_{j}, \Sigma_{j}\right)$, while guaranteeing minimal transportation cost (\ref{TransportCost}). Using the idea of Section II.B, we transcribe the problem of finding optimal control $u_{j}^{\star}$ to that of finding the optimal transport map (a.k.a. \emph{Brenier map}) $\beta_{j}^{\star} : x_{j} \mapsto x_{j+1}$, where
\begin{eqnarray}
\beta_{j}^{\star} \triangleq \beta^{\star}\left(x_{j}\right) = \underset{\beta\left(.\right)}{\text{argmin}} \: \displaystyle\int_{\mathbb{R}^{d}}\parallel \beta\left(x_{j}\right) - x_{j} \parallel_{\ell_{2}\left(\mathbb{R}^{d}\right)}^{2} \eta_{j} \: dx_{j},
\label{DiscreteTimeSpatialOptimizationProblem}
\end{eqnarray}
subject to the constraints (C1) $x_{j} \sim \eta_{j}$, (C2) $\beta\left(x_{j}\right) \sim \eta_{j+1}$, and (C3) $\eta_{j+1} = \beta\:\sharp\:\eta_{j}$. Then we have the following result.

\begin{theorem}
Consider the discrete-time Gaussian PDF control problem under LTI structure, i.e. in Fig. \ref{OverallSchematic}, let $\eta_{j} = \mathcal{N}\left(\mu_{j},\Sigma_{j}\right)$, where $\ker\left(\Sigma_{j}\right)\:\cap\:\text{Im}\left(\Sigma_{j+1}\right) = \{0\}$. Further, let $\mathcal{S}_{j+1}$ be given by the discrete-time LTI structure: $x_{j+1} = A x_{j} + B u_{j}$, $\forall j=0,1,\hdots,M$. Then the state feedback $u_{j}^{\star} \triangleq u^{\star}\left(x_{j}\right)$ that minimizes the transportation cost (\ref{TransportCost}), has the following properties.
\begin{enumerate}
\item The optimal state feedback, if exists, must be affine.\\

\item Optimal state feedback $u_{j}^{\star}$ exists iff $\left(\Gamma_{j} - A\right), \gamma_{j} \in \text{ker}\left(I - BB^{\dagger}\right)$, where
    {\small{\begin{eqnarray}
    \Gamma_{j} &=& \sqrt{\Sigma_{j+1}} \left(\sqrt{\Sigma_{j+1}}\:\Sigma_{j}\:\sqrt{\Sigma_{j+1}}\right)^{-\frac{1}{2}} \sqrt{\Sigma_{j+1}}, \label{BrenierMapGauss2GaussMatrix}\\
    \gamma_{j} &=& \mu_{j+1} - \mu_{j}. \label{BrenierMapGauss2GaussVector}
    \end{eqnarray}}}

\item If exists, then the optimal state feedback is given by the pair $\left(K_{j},\kappa_{j}\right)$, i.e. $u_{j}^{\star} = K_{j} x_{j} + \kappa_{j}$, where $K_{j} = B^{\dagger}\left(\Gamma_{j} - A\right) - \left(I - BB^{\dagger}\right)R$, and $\kappa_{j} = B^{\dagger}\gamma_{j} - \left(I - BB^{\dagger}\right)r$, for \emph{arbitrary} real matrix-vector pair $\left(R,r\right)$ of appropriate dimensions.\\

\item If $B$ is full rank, then the optimal state feedback is unique, and is given by $K_{j} = B^{-1}\left(\Gamma_{j}-A\right)$, $\kappa_{j} = B^{-1}\gamma_{j}$.
\end{enumerate}
\end{theorem}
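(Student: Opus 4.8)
The plan is to reduce \textbf{Problem 1.1}, under the prescribed Gaussian marginals and LTI structure, to two sequential tasks. First, I would write down in closed form the unique optimal transport (Brenier) map $\beta_j^{\star}$ of (\ref{DiscreteTimeSpatialOptimizationProblem}) between $\eta_j=\mathcal{N}(\mu_j,\Sigma_j)$ and $\eta_{j+1}=\mathcal{N}(\mu_{j+1},\Sigma_{j+1})$. Second, I would invert the closed-loop relation $x_{j+1}=Ax_j+Bu(x_j)=\beta_j^{\star}(x_j)$ for the feedback $u$, which reduces to a pair of linear matrix equations solvable by Moore--Penrose pseudo-inverse algebra. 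Claim 1 follows from the first task, while the existence condition of claim 2 and the parametrizations of claims 3--4 follow from the second.

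For the first task I would invoke Remark \ref{BrenierPolarFactorizationRemark}: the minimizer $\beta_j^{\star}$ of (\ref{DiscreteTimeSpatialOptimizationProblem}) is the gradient of a convex potential, and because both marginals are Gaussian this potential is a convex quadratic, so $\beta_j^{\star}$ is affine, $\beta_j^{\star}(x_j)=\Gamma_j x_j+\gamma_j$ with $\Gamma_j=\Gamma_j^{\top}\succeq 0$. I would then pin down $\Gamma_j$ and $\gamma_j$ from the push-forward constraint (C3): equating means fixes the offset $\gamma_j$, while equating covariances forces the quadratic matrix equation $\Gamma_j\Sigma_j\Gamma_j=\Sigma_{j+1}$, whose symmetric positive solution is exactly (\ref{BrenierMapGauss2GaussMatrix}). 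I would verify this by direct substitution: with $M\triangleq\Sigma_{j+1}^{1/2}\Sigma_j\Sigma_{j+1}^{1/2}$, the central factor $\Sigma_{j+1}^{1/2}\Sigma_j\Sigma_{j+1}^{1/2}$ collapses to $M$ and $\Gamma_j\Sigma_j\Gamma_j=\Sigma_{j+1}^{1/2}M^{-1/2}MM^{-1/2}\Sigma_{j+1}^{1/2}=\Sigma_{j+1}$. This already settles claim 1: since minimality of the transport cost forces the closed-loop map $x_j\mapsto Ax_j+Bu(x_j)$ to coincide $\eta_j$-almost everywhere with the affine $\beta_j^{\star}$, the quantity $Bu(x_j)=(\Gamma_j-A)x_j+\gamma_j$ is affine in $x_j$, so any optimal feedback is affine up to a dynamically inert component in $\ker B$.

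For the second task I would substitute $u_j=K_jx_j+\kappa_j$ and match $\beta_j^{\star}$ coefficient by coefficient, obtaining the two linear matrix equations $BK_j=\Gamma_j-A$ and $B\kappa_j=\gamma_j$. I would then apply the standard solvability criterion for $BX=C$: a solution exists if and only if $BB^{\dagger}C=C$, equivalently $(I-BB^{\dagger})C=0$, i.e. every column of $C$ lies in $\mathrm{Im}(B)=\ker(I-BB^{\dagger})$; applied to $C=\Gamma_j-A$ and $C=\gamma_j$ this is precisely the existence condition of claim 2. When solvable, the complete solution set is $X=B^{\dagger}C+(I-B^{\dagger}B)Z$ for arbitrary $Z$, and specializing this to the gain and the offset reproduces the $(K_j,\kappa_j)$ parametrization of claim 3, with the free pair $(R,r)$ encoding the $\ker B$ freedom. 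Claim 4 is then immediate: if $B$ is square and invertible, $B^{\dagger}=B^{-1}$, the projector $I-B^{\dagger}B$ vanishes, $\mathrm{Im}(B)=\mathbb{R}^{d}$ makes the existence condition automatic, and the solution collapses to the unique $K_j=B^{-1}(\Gamma_j-A)$, $\kappa_j=B^{-1}\gamma_j$.

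I expect the main obstacle to be the first task in the \emph{degenerate} regime. Verifying $\Gamma_j\Sigma_j\Gamma_j=\Sigma_{j+1}$ for a nonsingular $\Sigma_{j+1}$ is routine, but showing that (\ref{BrenierMapGauss2GaussMatrix}) is well defined, equals $\nabla\psi$ for a convex $\psi$, and is the unique optimal map on $\mathrm{supp}(\eta_j)$ when $\Sigma_j$ is rank-deficient is where the hypothesis $\ker(\Sigma_j)\cap\mathrm{Im}(\Sigma_{j+1})=\{0\}$ must be used to exclude directions along which the inverse square root, and hence the transport, degenerates. Two further points deserve care: I would state precisely the sense in which the optimal feedback is affine, since the $\ker B$ component of $u_j$ could in principle be chosen nonlinearly and $(R,r)$ is only its affine representative; and I would identify the offset $\gamma_j$ honestly from the mean-matching condition $\Gamma_j\mu_j+\gamma_j=\mu_{j+1}$, reconciling the resulting drift with the symbol introduced in (\ref{BrenierMapGauss2GaussVector}).
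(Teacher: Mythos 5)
Your proposal is correct and follows essentially the same two-stage route as the paper's proof: establish that the Gaussian-to-Gaussian optimal transport map is affine, $x_{j+1}=\Gamma_j x_j+\gamma_j$, then equate it with the closed loop $x_{j+1}=Ax_j+Bu_j$ to obtain the linear matrix equations $BK_j=\Gamma_j-A$, $B\kappa_j=\gamma_j$ (the paper's (\ref{DiscreteTimeStateCovarianceController})), and settle existence and parametrization by Moore--Penrose algebra. The differences are local, but two of them are corrections of the paper rather than weaknesses of your argument, so they are worth recording. First, where the paper obtains affinity and formula (\ref{BrenierMapGauss2GaussMatrix}) by citing \cite{OlkinPukelsheim1982, KnottSmith1984}, you derive them from Remark \ref{BrenierPolarFactorizationRemark} plus exhibit-and-verify; this is sound provided you phrase it as exhibiting the affine candidate (a gradient of a convex quadratic, since $\Gamma_j$ is symmetric positive semidefinite) that pushes $\eta_j$ to $\eta_{j+1}$ and then invoking Brenier uniqueness, rather than asserting a priori that the potential ``must be quadratic''; the degenerate-covariance regime you flag as the main obstacle is exactly what the paper outsources to those citations under $\ker\left(\Sigma_j\right)\cap\text{Im}\left(\Sigma_{j+1}\right)=\{0\}$, so you may do the same. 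Second, your mean-matching offset is the correct one: the Brenier map is $z\mapsto\mu_{j+1}+\Gamma_j\left(z-\mu_j\right)$, so the affine offset is $\mu_{j+1}-\Gamma_j\mu_j$, and the paper's (\ref{BrenierMapGauss2GaussVector}), $\gamma_j=\mu_{j+1}-\mu_j$, is valid only when $\Gamma_j\mu_j=\mu_j$; claims 2 and 3 should be read with $\mu_{j+1}-\Gamma_j\mu_j$ in place of $\gamma_j$ (this is consistent with the paper's own Theorem 2, whose means interpolate linearly). Third, your parametrization $K_j=B^{\dagger}\left(\Gamma_j-A\right)+\left(I-B^{\dagger}B\right)Z$ is the correct general solution of $BK_j=\Gamma_j-A$: the kernel of $X\mapsto BX$ consists of matrices $\left(I-B^{\dagger}B\right)Z$, whereas the paper's term $\left(I-BB^{\dagger}\right)R$ involves the projector onto $\text{Im}\left(B\right)^{\perp}$, and since $B\left(I-BB^{\dagger}\right)\neq 0$ in general, the paper's $K_j$ fails to satisfy the equation for arbitrary $R$ unless $B$ is invertible or range-symmetric (i.e. $\text{Im}\left(B\right)=\text{Im}\left(B^{\top}\right)$); so do not bend your formula to reproduce claim 3 verbatim. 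Both parametrizations collapse to the same unique solution when $B$ is invertible, so claim 4 and your proof of it stand. Finally, your caution about the precise sense of claim 1 is apt: when $\ker\left(B\right)\neq\{0\}$, the optimal feedback is affine only modulo an arbitrary, possibly nonlinear, component in $\ker\left(B\right)$, a qualification the paper leaves implicit.
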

\vspace*{0.1in}
\begin{proof}
Given, $\ker\left(\Sigma_{j}\right)\:\cap\:\text{Im}\left(\Sigma_{j+1}\right) = \{0\}$, we know \cite{OlkinPukelsheim1982, KnottSmith1984} that $\beta_{j}^{\star}$ satisfying
\begin{eqnarray}
\mathcal{N}\left(\mu_{j+1}, \Sigma_{j+1}\right) = \beta_{j}^{\star} \:\sharp\: \mathcal{N}\left(\mu_{j}, \Sigma_{j}\right),
\end{eqnarray}
is a unique affine transformation $z \mapsto \Gamma_{j} z + \gamma_{j}$. Since the optimal transport is $x_{j+1} = \Gamma_{j} x_{j} + \gamma_{j}$, the \emph{optimal controller, if exists, must be affine}, i.e. of the form $u_{j}^{\star} = K_{j} x_{j} + \kappa_{j}$, where $K_{j}$ and $\kappa_{j}$ solve the \emph{linear matrix equations}
\begin{eqnarray}
A + B K_{j} = \Gamma_{j}, \qquad\quad B \kappa_{j} = \gamma_{j}.
\label{DiscreteTimeStateCovarianceController}
\end{eqnarray}
Now, from Lemma 2.4 in \cite{OharaKitamori1993}, there exists $K_{j}$ solving the equation $B K_{j} = \left(\Gamma_{j} - A\right)$ iff
\begin{eqnarray}
&B B^{\dagger} \left(\Gamma_{j} - A\right) = \left(\Gamma_{j} - A\right) \Leftrightarrow \left(I - BB^{\dagger}\right) \left(\Gamma_{j} - A\right) = 0 \nonumber\\
&\Leftrightarrow \left(\Gamma_{j} - A\right) \in \text{ker}\left(I - BB^{\dagger}\right).
\label{OharaKitamoriMatrixMatrixCondition}
\end{eqnarray}
On the other hand, the matrix-vector equation $B \kappa_{j} = \gamma_{j}$ admits solution iff
\begin{eqnarray}
BB^{\dagger}\gamma_{j} = \gamma_{j} \Leftrightarrow \gamma_{j} \in \text{ker}\left(I - BB^{\dagger}\right).
\label{MatrixVectorCondition}
\end{eqnarray}
When $\left(\Gamma_{j} - A\right), \gamma_{j} \in \text{ker}\left(I - BB^{\dagger}\right)$, then the (non-unique) solution is given by \cite{OharaKitamori1993}: $K_{j} = B^{\dagger}\left(\Gamma_{j} - A\right) - \left(I - BB^{\dagger}\right)R$, and $\kappa_{j} = B^{\dagger}\gamma_{j} - \left(I - BB^{\dagger}\right)r$, for \emph{arbitrary} real matrix-vector pair $\left(R,r\right)$ of appropriate dimensions. If $B$ is full rank, then $B^{-1}$ exists and $\left(I - BB^{\dagger}\right) = 0$, resulting the unique solution.
\end{proof}
\begin{remark}
Notice that till now, we assumed $\mathcal{S}_{j+1}$ is given by the same LTI pair $\left(A,B\right)$ $\forall j=0,1,\hdots,M$. It is easy to see that the above Theorem generalizes when the LTI pair $\left(A_{j},B_{j}\right)$ are different for different horizons.
\end{remark}

\subsection{How is this Different from Ensemble Control}
In recent years, a set of tools have been developed \cite{LiHarvardThesis2006, LiKhaneja2007, LiKhaneja2009} for finite-horizon distribution shaping using \emph{open-loop} control signal, under the constraint that the \emph{same open-loop excitation is applied to all realizations} of the system dynamics (characterized by parametric dispersions). Our approach of solving \textbf{Problem 1.1} is different from this paradigm, termed as ``ensemble control", in at least two ways. First, the ensemble control seeks an open-loop solution while ours is closed-loop solution. Second, the ensemble controllability needs to be established to guarantee the existence of such open loop excitation, however, if found, will allow sensor-less ensemble shaping. On contrary, our feedback implementation requires sensing at the ensemble level, but guarantees geodesic transport over each horizon.


\section{Data Driven $d$\textsuperscript{th} Order Modeling}
In this Section, we consider \textbf{Problem 1.2}, namely interpolating observed distributional data by identifying dynamical models over each finite horizon, in the absence of \emph{a priori} structural knowledge (unlike \textbf{Problem 1.1}) about the models. The only choice the modeler can make is to decide whether a discrete-time or continuous-time model is apt. Once this choice is made, a \emph{deterministic} trajectory-level model is desired that satisfies the two point boundary value problem in the output PDF level, at the beginning and end of the horizon length. Notice that we restrict ourselves to derive a deterministic flow or map, even though the observed PDFs may have been generated by a true but unknown state dynamics governed by PDE or SDE. In this sense, the modeling problem can be thought of as a sequence of finite-horizon distributional realization problems.

To illustrate how optimal transport ideas can befit here, we work out an example problem. Since the discrete-time Gaussian \emph{modeling} problem can be dealt similar to Section III.A, we choose a \emph{continuous-time non-Gaussian} scenario.

\begin{example}
Consider the case when the true dynamics is given by the Duffing oscillator
\begin{eqnarray}
\dot{x}_{1} = x_{2}, \quad \dot{x}_{2} = - \alpha x_{1}^{3} - \beta x_{1} - \delta x_{2}, \, y = \{x_{1}, \, x_{2}\}^{\top},
\label{DuffingDynamics}
\end{eqnarray}
where $\alpha = 1$, $\beta = -1$, $\delta = 0.5$. One can verify that for these values of the parameters $\alpha, \beta, \delta$, the dynamics (\ref{DuffingDynamics}) has three equilibria: $\left(0,0\right)$, $\left(\pm\sqrt{\displaystyle\frac{-\beta}{\alpha}}, 0\right)$. Linear stability analysis tells that the origin is a saddle node while the remaining two equilibria are stable foci. We use (\ref{DuffingDynamics}) only to generate synthetic data and assume that the knowledge of this true vector field is unavailable to the modeler.

To generate the true distributional PDFs, we assume that the initial joint state PDF $\xi_{0}\left(x_{0}\right) = \mathcal{U}\left(\left[-2,2\right]^{2}\right)$. We generate 500 samples from this uniform PDF, and evaluate them at $\xi_{0}$. Starting from these samples, we evolve the joint state PDF $\xi\left(x_{1}(t), x_{2}(t), t\right)$ subject to (\ref{DuffingDynamics}) by solving the Liouville PDE $\frac{\partial\xi}{\partial t} + \nabla\cdot\left(\xi f\right) = 0$, where $f\left(x_{1},x_{2}\right)$ is the Duffing vector field. We perform this uncertainty propagation by solving the method-of-characteristics ODE corresponding to the Liouville PDE. Details on this methodology can be found in Section II of \cite{HalderBhattacharyaJGCDLiouville}. This procedure results scattered colored data (Fig. \ref{DuffingScatteredData}) at every time $t_{j}$, $j = 1,2,\hdots,10$, where the location of the samples are determined from the dynamics while the color value at a sample location indicates the \emph{exact} (unlike Monte Carlo histograms) joint PDF value at that sample location, at that time. Since $y = \{x_{1}, \, x_{2}\}^{\top}$, hence Fig. \ref{DuffingScatteredData} depicts the sequence $\{t_{j}, \eta_{j}\}_{j=1}^{10}$, in the nomenclature of Section I.

\begin{figure}[tb]
\begin{center}
 \includegraphics[width=0.47\textwidth]{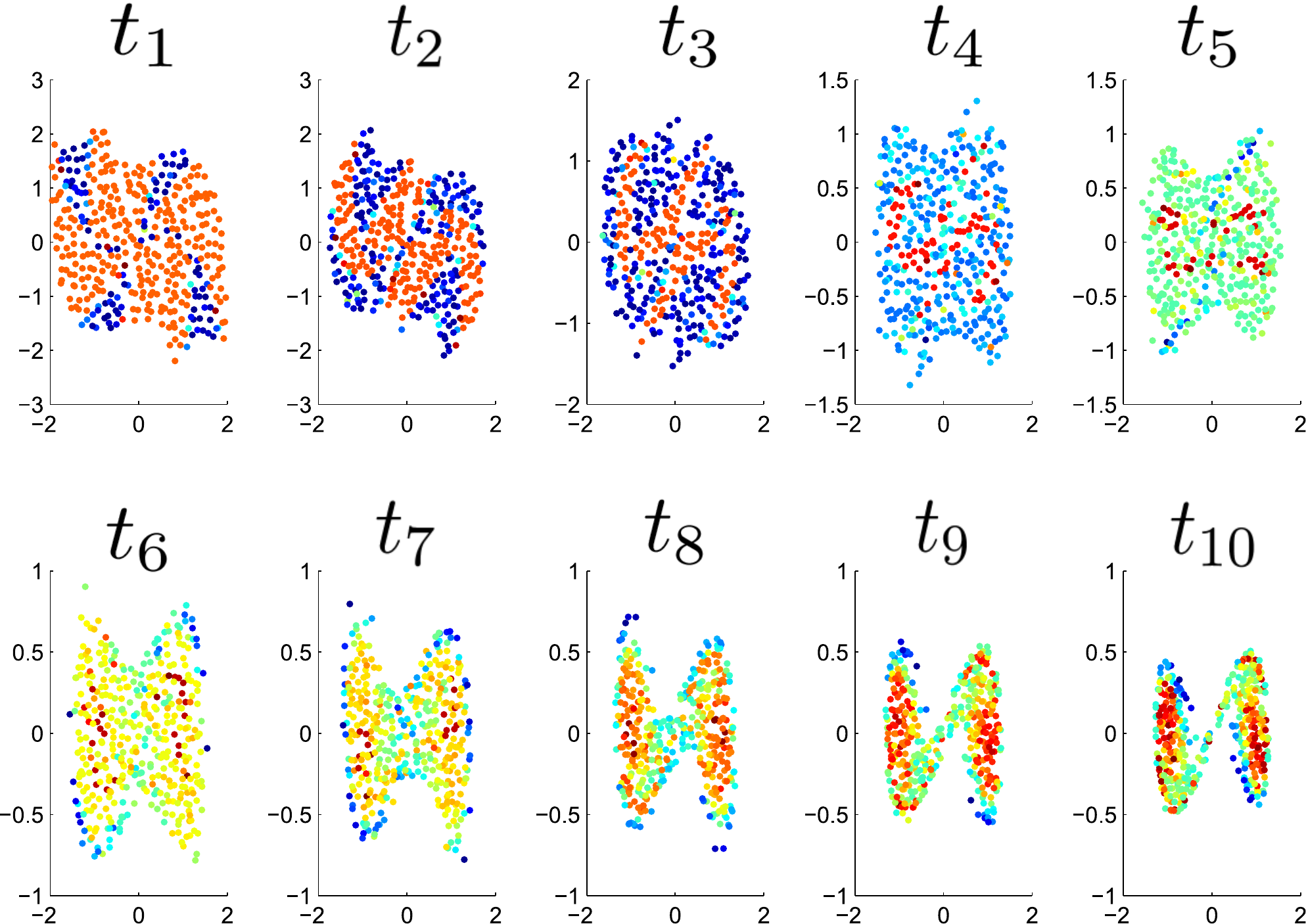}
 \end{center}
 \vspace*{-0.1in}
\caption{The distributional scattered data $\{t_{j}, \eta_{j}\}_{j=1}^{10}$ obtained by solving Liouville PDE for Duffing dynamics. The color value indicates the magnitude (\emph{red = high, blue = low}) of the joint PDF $\eta_{j}$. In our simulation, $t_{j} = \frac{j}{2}$, where $j=1,2,\hdots,10$.}
\label{DuffingScatteredData}
\vspace*{-0.1in}
\end{figure}

Let Fig. \ref{DuffingScatteredData} be the distributional data akin to Fig. \ref{OverallSchematic}, observed by the modeler. A continuous-time model is sought over each horizon: $t \in [t_{j}, t_{j+1})$. To solve this problem, we employ the Benamou-Brenier space-time optimization formulation described in Section II.D, resulting a vector field $v_{j}\left(x_{1}(t),x_{2}(t),t\right)$ per horizon, which solves the two point Liouville boundary value problem (guaranteeing end-point PDF matches) while incurring minimum amount of work over each $[t_{j}, t_{j+1})$. For this purpose, we take the two end point scattered data representation of $\eta_{j}$ and $\eta_{j+1}$, and interpolate the data over a regular grid, followed by Douglas-Rachford proximal operator splitting algorithm \cite{GabrielProxOpSplitArXiv2013} to solve the ensuing non-smooth convex optimization (\ref{BBcost})-(\ref{BenamouBrenierEulerian}), resulting the vector field $v_{j}\left(x_{1}(t),x_{2}(t),t\right)$. Fig. \ref{T1toT2} and \ref{T8toT9} show the gridded observed PDFs and the intermediate PDF reconstructions for $\left(t_{1},\eta_{1}\right) \rightarrow \left(t_{2},\eta_{2}\right)$, and $\left(t_{8},\eta_{8}\right) \rightarrow \left(t_{9},\eta_{9}\right)$, respectively, superimposed with their respective Benamou-Brenier vector fields (\emph{black arrows}). In Fig. \ref{TransportCompare}, we compare the PDF transportation paths for $t\in [t_{1},t_{2})$ in $W$, for the true Duffing dynamics (\ref{DuffingDynamics}) and the optimal transport dynamics. In view of Remark \ref{GradientFlow}, this plot shows that unlike the Brenier-Benamou \emph{gradient} vector field, (\ref{DuffingDynamics}) does not result into geodesic PDF transport. This is not surprising, since $\nabla \times f\left(x_{1},x_{2}\right) = \left(- 3\alpha x_{1}^{2} - \beta - 1\right) \widehat{\mathbf{k}} = - 3 x_{1}^{2}\:\widehat{\mathbf{k}}$, i.e. Duffing vector field has non-zero vorticity everywhere except $x_{1} = 0$, thus causing a clockwise rotational flow that requires more transportation effort than what could be achieved by a gradient flow.

\begin{figure}[t]
\begin{center}
 \includegraphics[width=0.49\textwidth]{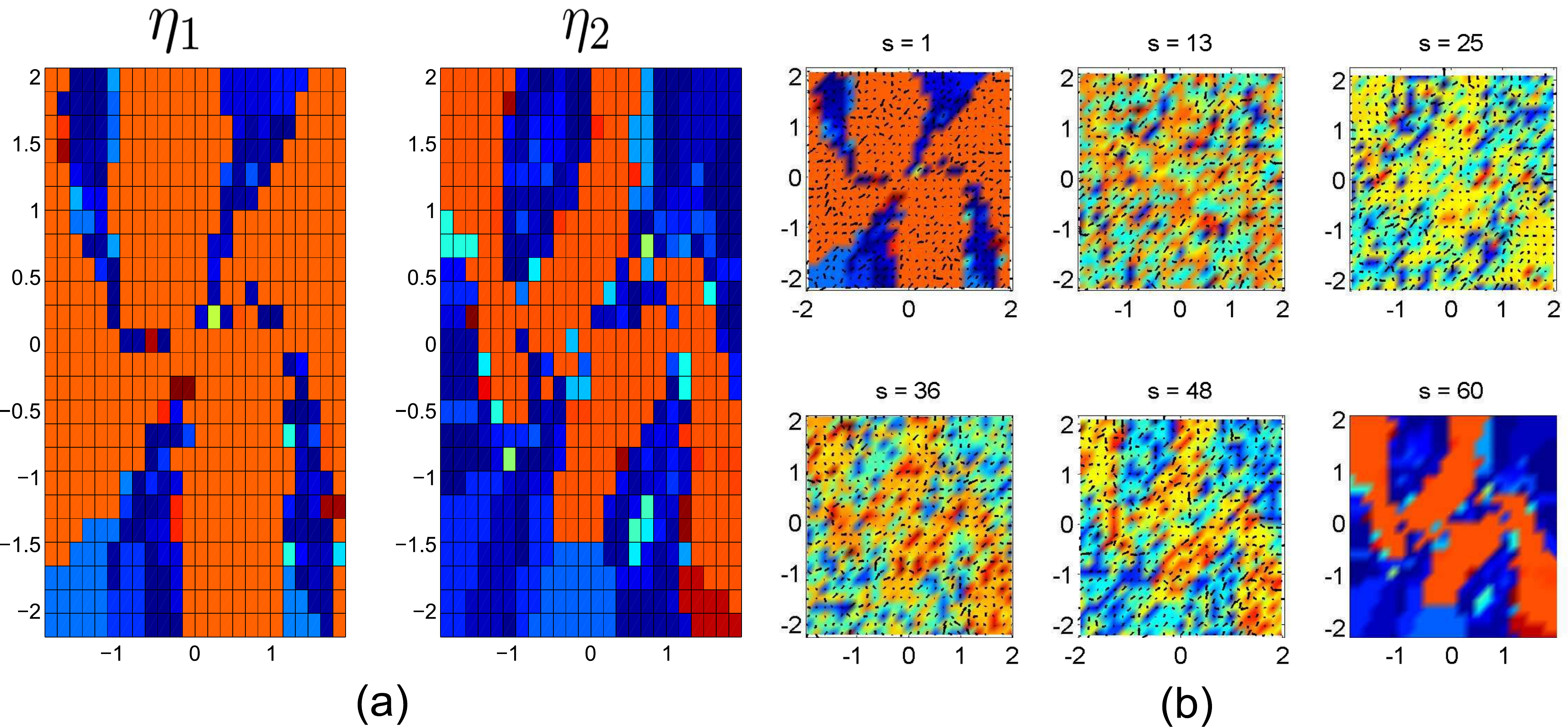}
 \end{center}
 \vspace*{-0.05in}
\caption{(a) The gridded PDFs $\eta_{1}$ and $\eta_{2}$; (b) The background color (\emph{red = high, blue = low}) shows optimal transport reconstructions for PDF $\eta\left(t\right)$, $t \in [t_{1}, t_{2})$, superimposed with Benamou-Brenier vector field $v_{1}^{\star}$ (\emph{black arrows}). The interval $[t_{1}, t_{2})$ was subdivided into 60 divisions, denoted by the index $s$ above, i.e. $s=0 \Leftrightarrow t_{1}$, $s=60 \Leftrightarrow t_{2}$. Notice that the vector field vanishes at $t_{2}$.}
\label{T1toT2}
\vspace*{-0.05in}
\end{figure}

\begin{figure}[t]
\begin{center}
 \includegraphics[width=0.49\textwidth]{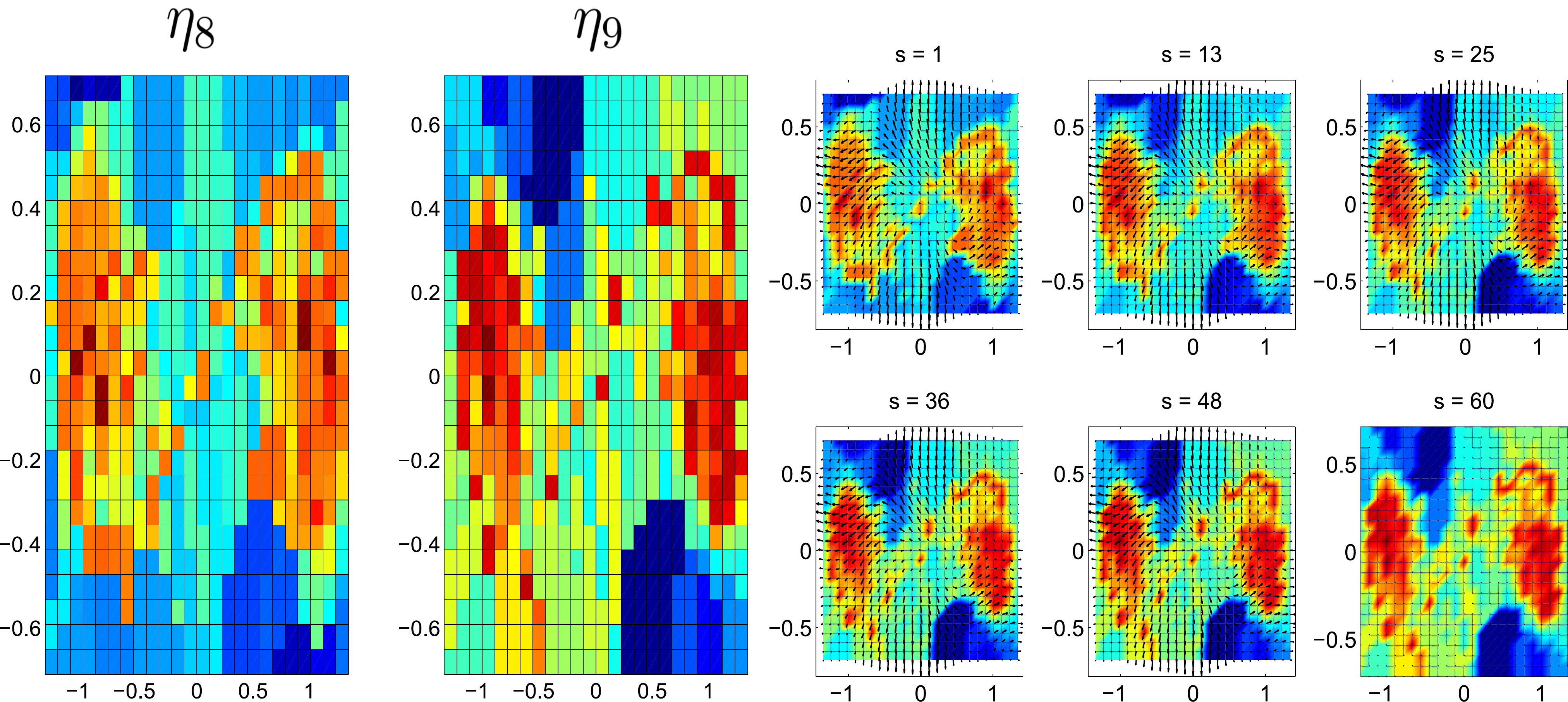}
 \end{center}
 \vspace*{-0.05in}
\caption{(a) The gridded PDFs $\eta_{8}$ and $\eta_{9}$; (b) The background color (\emph{red = high, blue = low}) shows optimal transport reconstructions for PDF $\eta\left(t\right)$, $t \in [t_{8}, t_{9})$, superimposed with Benamou-Brenier vector field $v_{8}^{\star}$ (\emph{black arrows}). Like Fig. \ref{T1toT2}, $s=0 \Leftrightarrow t_{8}$, $s=60 \Leftrightarrow t_{9}$. Again, the vector field vanishes at $t_{9}$.}
\label{T8toT9}
\vspace*{-0.05in}
\end{figure}

\begin{figure}[h]
\begin{center}
 \includegraphics[width=0.43\textwidth]{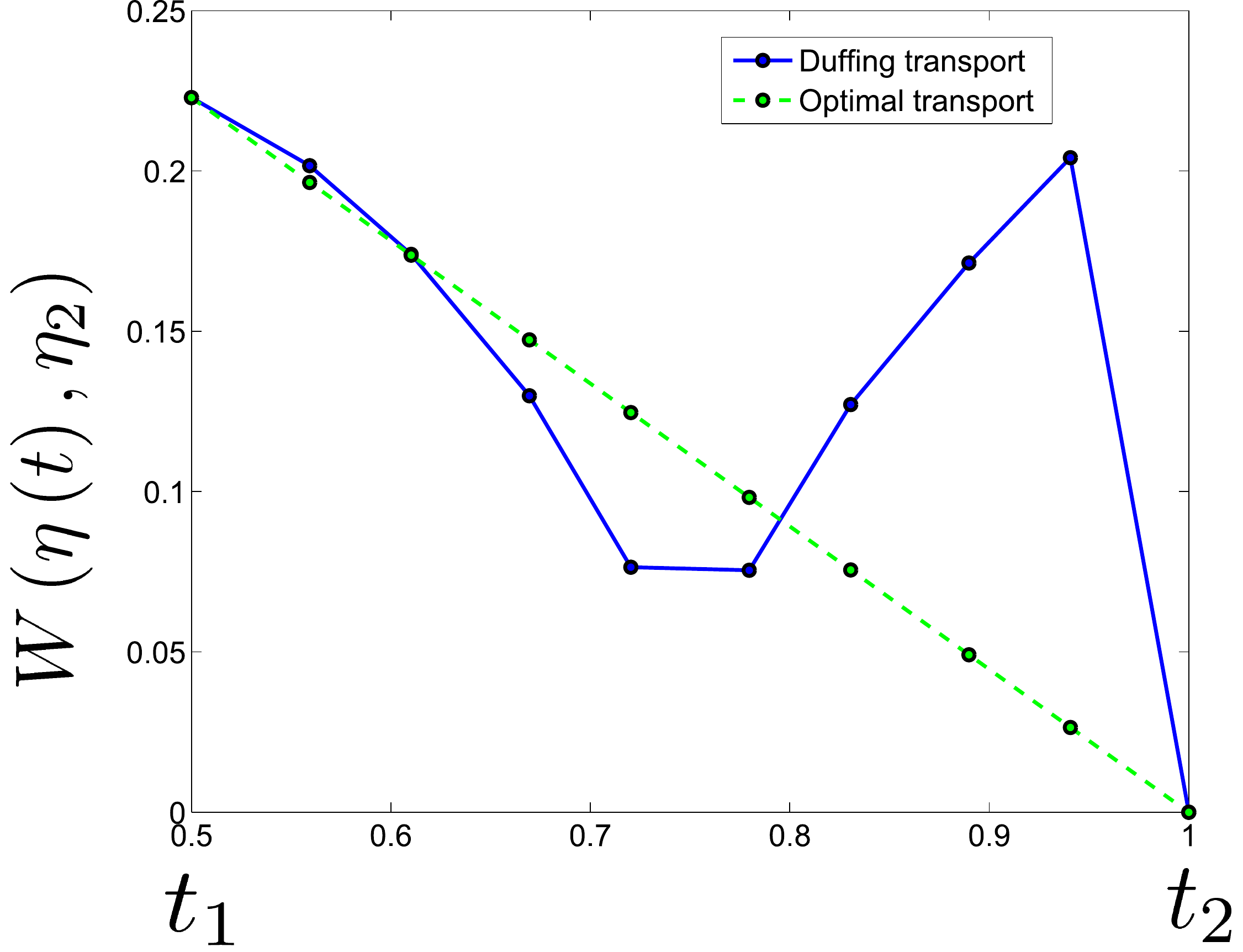}
 \end{center}
 \vspace*{-0.1in}
\caption{Comparison of optimal transport (Benamou-Brenier dynamics) with Duffing transport (true dynamics) for $t \in [t_{1}, t_{2})$.}
\label{TransportCompare}
\vspace*{-0.1in}
\end{figure}

\end{example}


\section{Model Refinement}
In this Section, we consider \textbf{Problem 1.3}, namely refining a baseline model against experimental data. We first formulate the model refinement problem as that of finding the optimal transport map introduced in Section II.

\subsection{General formulation}
We formulate the model refinement problem (Fig. \ref{ModelRefinementBlockDiagm}) as the natural successor of the distributional model validation formulation proposed in \cite{HalderBhattacharya2011, HalderBhattacharya2012}. In the validation problem, the model predicted output PDF $\widehat{\eta}$ is compared with the experimentally observed output PDF $\eta$, at each instance of measurement availability $t_{j}$, $j=0,1,\hdots,M$, and an inference is made by looking at the prediction-observation gap quantified via $W\left(t_{j}\right)$. The key insight behind our refinement formulation is that usually there is no specific requirement on the structure of the refined model, as long as we can make the refined dynamics track the observed output PDFs. This provides us the freedom to formulate the model refinement problem over the model's output map while keeping the model's state equation intact. This has two implications: (i) the refinement algorithm will involve the output dimension $d$, typically less than the state dimension, and (ii) both state and output modeling errors would be accounted by updating the model's output map. To make the ideas precise, we give the model refinement problem statement for a model whose output map is given by $\widehat{y} = \widehat{h}\left(\widehat{x}\right)$, where $\widehat{x}$ and $\widehat{y}$ are model-predicted state and output vectors.

\begin{figure}[t]
\begin{center}
 \includegraphics[scale=0.67]{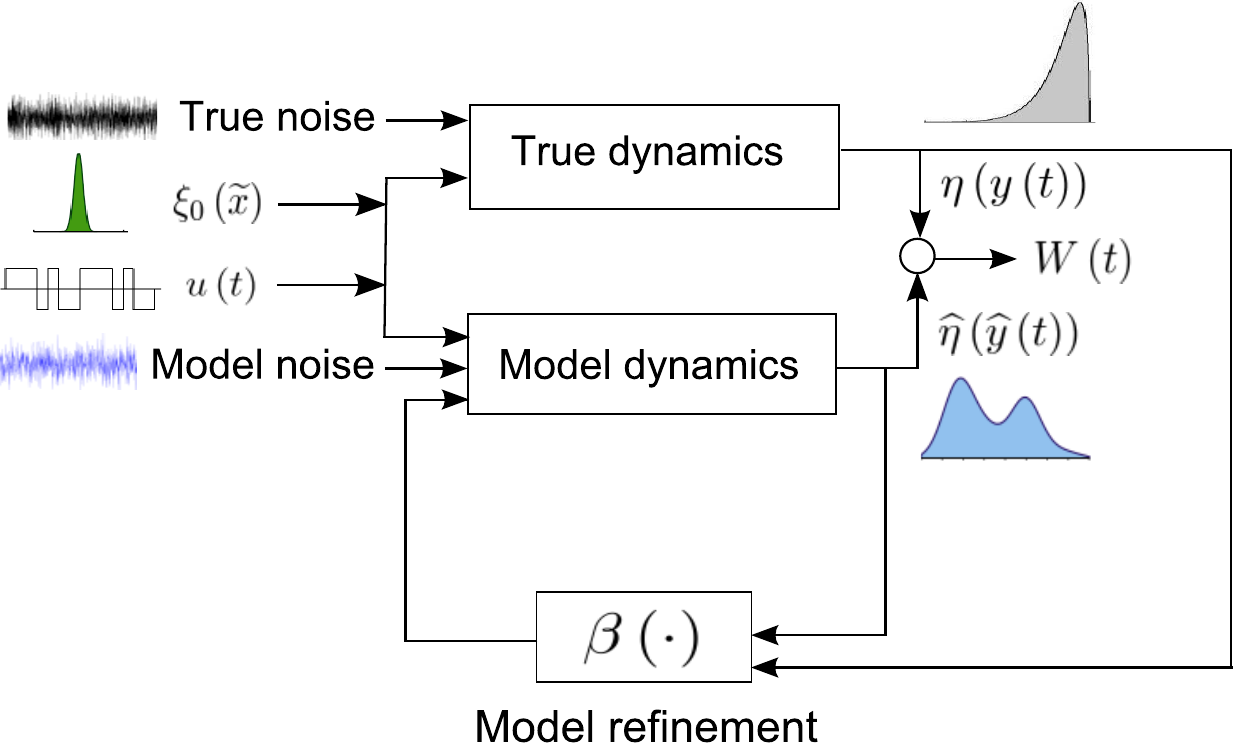}
 \end{center}
 \vspace*{-0.1in}
\caption{The block diagram for proposed model refinement formulation. Here $\xi_{0}\left(\widetilde{x}\right)$ refers to the joint PDF supported on the space of initial conditions and parameters, $u(t)$ is an open-loop control command, and the true and model dynamics can be affected by different noises.}
\label{ModelRefinementBlockDiagm}
\vspace*{-0.1in}
\end{figure}
\vspace*{-0.05in}

\subsection{Problem Statement: Transcribing Model Refinement Problem as Finding Optimal Transport Map}
At $t = t_{j}$, let us introduce $\widehat{y}_{j}^{-} \triangleq \widehat{y}_{j}$, and denote $\widehat{h}^{-}\left(\cdot\right) \triangleq \widehat{h}\left(\cdot\right)$. We want to find the \emph{Brenier map} $\beta_{j}\left(\cdot\right)$ for updating the predicted output, i.e. $\widehat{y}_{j}^{+} = \beta_{j}\left(\widehat{y}_{j}^{-}\right)$, where $\widehat{y}_{j}^{+} \sim \eta_{j}$ and $\widehat{y}_{j}^{-} \sim \widehat{\eta}_{j}$. In other words, find $\beta_{j}\left(\cdot\right)$ such that $\eta_{j} = \beta_{j} \:\sharp\: \widehat{\eta}_{j}$. Clearly, this problem is underdetermined since there are many ways to morph $\widehat{\eta}_{j}$ to $\eta_{j}$. Then we must look for an \emph{optimal push-forward map} $\beta_{j}^{\star}\left(\cdot\right)$ that would require minimum amount of transport effort among all possible push-forward maps $\beta_{j}\left(\cdot\right)$, i.e. we solve (\ref{BetaVariational}). Once $\beta_{j}^{\star}\left(\cdot\right)$ has been found, the refined model is given by augmenting the model's state equation with the new output map:
\begin{eqnarray}
\widehat{y}_{j} = \beta_{j}^{\star} \circ \widehat{h}\left(\widehat{x}\right).
\label{RefinedModelBrenierMap}
\end{eqnarray}

\begin{example}
\textbf{(Refining Linear Model against Gaussian Measurements)} Let the true data being generated by the discrete-time LTI system $x_{j+1} = A x_{j}$, $y_{j} = C x_{j}$, that is unknown to the modeler. The proposed model is $\widehat{x}_{j+1} = \widehat{A} \widehat{x}_{j}$, $\widehat{y}_{j} = \widehat{C}\widehat{x}_{j}$, where the Schur-Cohn stable matrices $A$ and $\widehat{A}$ are given by
\begin{eqnarray}
A = \begin{bmatrix} 0.4 & & -0.1\\ 2 & & 0.6\end{bmatrix}, \quad \widehat{A} = \begin{bmatrix} 0.2 & & -0.7\\ -0.7 & & 0.1\end{bmatrix},
\end{eqnarray}
and the output matrices are
\begin{eqnarray}
 C= \begin{bmatrix} -1 & & 0.03\\ -0.2 & & 0.8\end{bmatrix}, \quad \widehat{C} = \begin{bmatrix} 1 & & 0\\ 0 & & 1\end{bmatrix}.
\end{eqnarray}
Starting from the initial Gaussian state PDF $\xi_{0} = \mathcal{N}\left(\mu_{0},P_{0}\right)$ with $\mu_{0} = \{1, \, 3\}^{\top}$, $P_{0} = \begin{bmatrix} 10 & & 6\\ 6 & & 7\end{bmatrix}$, we refine the model at three instances of measurement availability: $j = 1, 2$, and 3. The results for the model refinement algorithm are shown in Fig. \ref{Gauss2Gauss}. To illustrate how the results of Section II.E are applied in this particular refinement problem, we provide the following Theorem.
\end{example}

\begin{figure}[tb]
\begin{center}
 \includegraphics[scale=0.41]{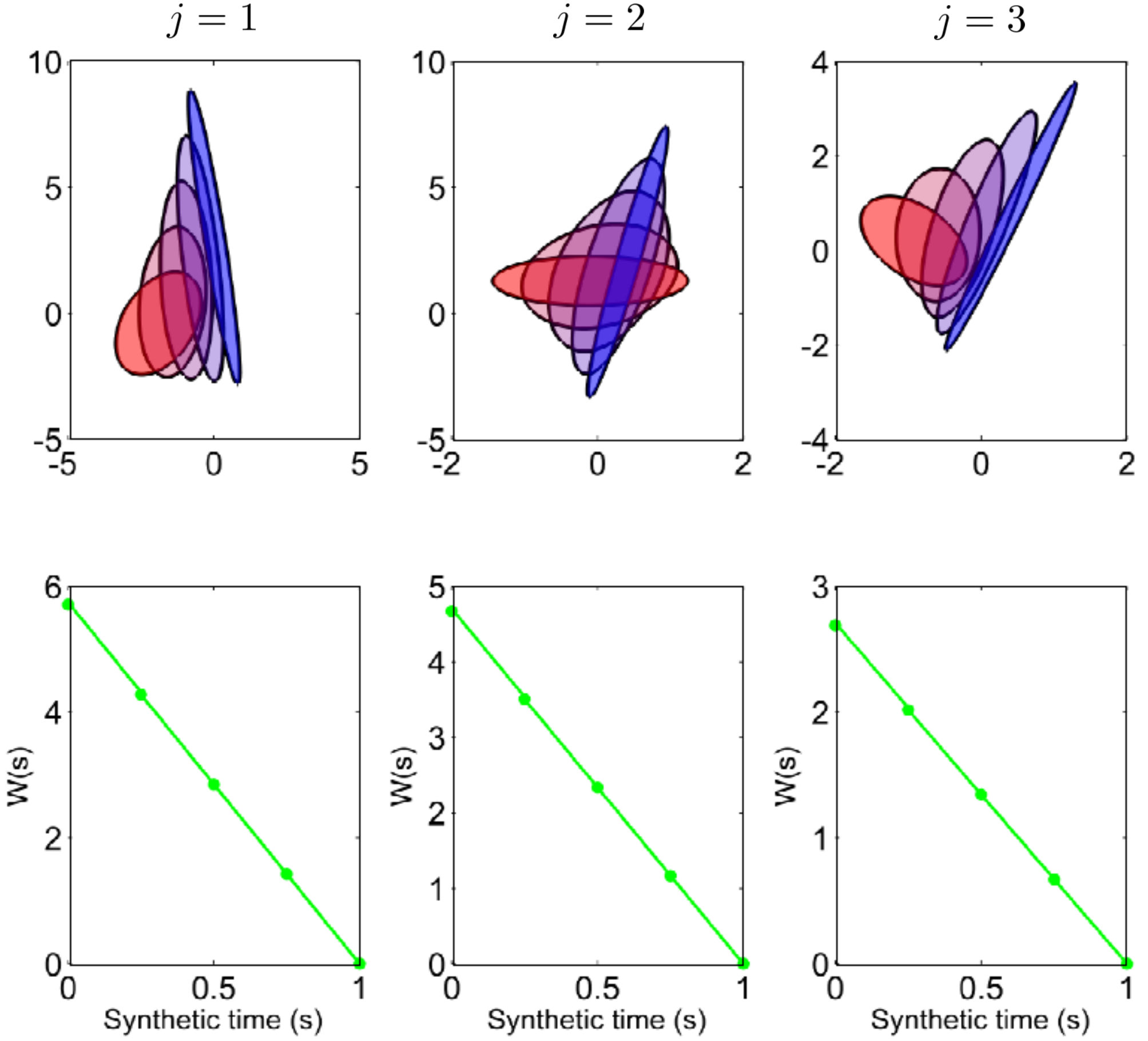}
 \end{center}
 \vspace*{-0.1in}
\caption{Shown here is the refinement process of the linear model $\left(\widehat{A},\widehat{C}\right)$ at times $j = 1, 2$ and 3, so that the model predicted output PDFs match with the true PDF, generated by $\left(A,C\right)$ at each $j$. Since both processes are Gaussian, the \emph{top row} shows 1-$\sigma$ ellipses of the respective normal PDFs (\emph{red = model predicted}, \emph{blue = true}). At every fixed $j$, we also plot intermediate Gaussians generated during the refinement process. The color of these intermediate 1-$\sigma$ ellipses are interpolated \emph{from red to blue}, to show the notion of synthetic time $s \in \left[0,1\right]$, as the physical time index $j$ remains zero-order hold. This also shows that the Gaussian-to-Gaussian refinement happens via Gaussians, i.e. the set of Gaussian PDFs is geodesically convex. The \emph{bottom row} shows that although the PDFs over synthetic time gets nonlinearly interpolated (McCann's \emph{displacement interpolation} \cite{McCann1997}), the Wasserstein distance $W(s)$ gets linearly interpolated, as predicted by (\ref{WIsLinearInterpolation1}).}
\label{Gauss2Gauss}
\vspace*{-0.1in}
\end{figure}
\vspace*{-0.05in}

\begin{theorem}\textbf{($s$\textsuperscript{th} synthetic time PDF at $j$\textsuperscript{th} physical time)}
Let $s \in \left[0,1\right]$ and consider the above linear Gaussian refinement problem with initial PDF $\mathcal{N}\left(\mu_{0},P_{0}\right)$. At the $j$\textsuperscript{th} instance of measurement availability, the intermediate PDF during refinement is a Gaussian PDF $\mathcal{N}\left(\mu_{\widehat{y}\rightarrow y}\left(s\right), \Sigma_{\widehat{y}\rightarrow y}\left(s\right)\right)$ where
\begin{eqnarray}
\mu_{\widehat{y}\rightarrow y}\left(s\right) = \left[\left(1 - s\right) \:\widehat{C}\widehat{A}^{j} \: + \: s \:CA^{j}\right] \mu_{0},
\label{sthIntermediateAtjthTimeMean}
\end{eqnarray}
{\small{\begin{eqnarray}
\Sigma_{\widehat{y}\rightarrow y}\left(s\right) =\left[\left(1 - s\right) \:I \: + \: s \:\Gamma\left(j\right)\right] \left(\left(\widehat{C}\widehat{A}^{j}\right) P_{0} \left(\widehat{C}\widehat{A}^{j}\right)^{\top}\right) \nonumber\\
\left[\left(1 - s\right) \:I \: + \: s \:\Gamma\left(j\right)\right],
\label{sthIntermediateAtjthTimeCov}
\end{eqnarray}}}
where
{\small{\begin{eqnarray}
\Gamma\left(j\right) \triangleq \sqrt{\left(CA^{j}\right) P_{0} \left(CA^{j}\right)^{\top}} \: \left(\sqrt{\left(CA^{j}\right) P_{0} \left(CA^{j}\right)^{\top}} \right. \nonumber\\
 \left. \left(\widehat{C}\widehat{A}^{j}\right) P_{0} \left(\widehat{C}\widehat{A}^{j}\right)^{\top} \: \sqrt{\left(CA^{j}\right) P_{0} \left(CA^{j}\right)^{\top}}\right)^{-1/2} \: \nonumber\\
 \sqrt{\left(CA^{j}\right) P_{0} \left(CA^{j}\right)^{\top}}.
\end{eqnarray}}}
\end{theorem}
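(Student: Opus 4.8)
The plan is to identify the two endpoint output PDFs, recognize the Gaussian-to-Gaussian Brenier map already established in the proof of Theorem 1, and then apply the displacement-interpolation corollary of Section II.E to read off the intermediate Gaussian. First I would propagate the common initial state PDF $\mathcal{N}\left(\mu_{0},P_{0}\right)$ through each linear dynamics. Since an affine image of a Gaussian is Gaussian, at the $j$\textsuperscript{th} physical time the model-predicted output PDF is the source $\widehat{\eta}_{j} = \mathcal{N}\big(\widehat{C}\widehat{A}^{j}\mu_{0},\,(\widehat{C}\widehat{A}^{j}) P_{0} (\widehat{C}\widehat{A}^{j})^{\top}\big)$ and the true output PDF is the target $\eta_{j} = \mathcal{N}\big(CA^{j}\mu_{0},\,(CA^{j}) P_{0} (CA^{j})^{\top}\big)$. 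Writing $\widehat{\Sigma} \triangleq (\widehat{C}\widehat{A}^{j}) P_{0} (\widehat{C}\widehat{A}^{j})^{\top}$ and $\Sigma \triangleq (CA^{j}) P_{0} (CA^{j})^{\top}$, these take the roles of $\widehat{\rho}$ and $\rho$ in Section II.E.

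Next I would invoke the Gaussian-to-Gaussian optimal map used in the proof of Theorem 1 \cite{OlkinPukelsheim1982, KnottSmith1984}: under the non-degeneracy hypothesis (here $\ker(\widehat{\Sigma})\cap\text{Im}(\Sigma)=\{0\}$, ensuring the inner square root is invertible), the Brenier map is the affine map $\beta\left(\widehat{y}\right) = \Gamma\left(j\right)\big(\widehat{y}-\widehat{C}\widehat{A}^{j}\mu_{0}\big) + CA^{j}\mu_{0}$, with $\Gamma\left(j\right)=\sqrt{\Sigma}\,(\sqrt{\Sigma}\,\widehat{\Sigma}\,\sqrt{\Sigma})^{-1/2}\sqrt{\Sigma}$. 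I would observe that this is precisely the $\Gamma\left(j\right)$ in the statement (it is the matrix (\ref{BrenierMapGauss2GaussMatrix}) with target $\Sigma$ and source $\widehat{\Sigma}$), and record that it is symmetric positive definite, being a symmetric conjugation of a symmetric matrix power --- a fact needed below.

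I would then apply the corollary (\ref{MapIsLinearInterpolation}) and (\ref{PDFIsNonlinearInterpolation1}): the intermediate transport map over synthetic time is $\beta_{s}=(1-s)\,\text{Id}+s\,\beta$, and the intermediate PDF is $\varphi\left(s\right)=\beta_{s}\,\sharp\,\widehat{\eta}_{j}$. Because $\beta_{s}$ is affine and $\widehat{\eta}_{j}$ is Gaussian, $\varphi\left(s\right)$ is Gaussian; this already proves the claimed form and, incidentally, the geodesic convexity of the Gaussian family noted in Fig. \ref{Gauss2Gauss}. It then remains to compute its two moments. Pushing $\widehat{y}\sim\mathcal{N}(\widehat{\mu},\widehat{\Sigma})$ through $\beta_{s}\left(\widehat{y}\right)=[(1-s)I+s\Gamma\left(j\right)]\,\widehat{y}-s\Gamma\left(j\right)\widehat{\mu}+s\mu$ yields mean $[(1-s)I+s\Gamma\left(j\right)]\widehat{\mu}-s\Gamma\left(j\right)\widehat{\mu}+s\mu=(1-s)\widehat{\mu}+s\mu$, where the $\Gamma\left(j\right)\widehat{\mu}$ terms cancel, giving (\ref{sthIntermediateAtjthTimeMean}). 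The covariance is $[(1-s)I+s\Gamma\left(j\right)]\,\widehat{\Sigma}\,[(1-s)I+s\Gamma\left(j\right)]^{\top}$, and using symmetry of $\Gamma\left(j\right)$ to drop the transpose gives (\ref{sthIntermediateAtjthTimeCov}).

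The routine parts are the moment pushforwards. The only genuinely delicate point is the bookkeeping of the affine map's translation term: it is essential that $\beta$ send $\widehat{\mu}$ to $\mu$ so that the intermediate mean collapses to the convex combination $(1-s)\widehat{\mu}+s\mu$ rather than carrying spurious $\Gamma\left(j\right)\widehat{\mu}$ contributions. This, together with verifying the symmetry of $\Gamma\left(j\right)$ so that the interpolated covariance assumes the clean symmetric-sandwich form, constitutes the crux of the argument. I would also flag at the outset the non-degeneracy condition guaranteeing $\Gamma\left(j\right)$ is well defined, mirroring the $\ker\cap\text{Im}=\{0\}$ hypothesis of Theorem 1.
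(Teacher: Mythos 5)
Your proposal is correct and follows essentially the same route as the paper's proof: propagate $\mathcal{N}\left(\mu_{0},P_{0}\right)$ through the two linear systems to identify the endpoint output Gaussians, then apply the Section II.E displacement-interpolation results (\ref{MapIsLinearInterpolation}), (\ref{PDFIsNonlinearInterpolation1}) with the Gaussian-to-Gaussian Brenier map built from (\ref{BrenierMapGauss2GaussMatrix}). The only difference is one of detail: you make explicit the affine map's translation term (so that $\beta$ sends $\widehat{\mu}$ to $\mu$), the symmetry of $\Gamma\left(j\right)$, and the moment pushforward computation, all of which the paper leaves implicit when it cites (\ref{PDFIsNonlinearInterpolation1}).
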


\begin{proof}
We know that $\mu_{y}\left(j\right) = C \mu_{x}\left(j\right) = C A^{j} \mu_{0}$, and similarly, $\widehat{\mu}_{\widehat{y}}\left(j\right) = \widehat{C} \widehat{A}^{j} \mu_{0}$. On the other hand, we have ${\small{\Sigma_{y}\left(j\right) = C \Sigma_{x}\left(j\right) C^{\top} = C A^{j}P_{0}\left(A^{j}\right)^{\top} C^{\top}}}$, and similarly, ${\small{\widehat{\Sigma}_{\widehat{y}}\left(j\right) = \left(\widehat{C} \widehat{A}^{j}\right)P_{0}\left(\widehat{C} \widehat{A}^{j}\right)^{\top}}}$.

From (\ref{PDFIsNonlinearInterpolation1}), we get $\mu_{\widehat{y}\rightarrow y}\left(s\right) = \left(1-s\right)\widehat{\mu}_{\widehat{y}}\left(j\right) + s \mu_{y}\left(j\right) = \left[\left(1 - s\right) \:\widehat{C}\widehat{A}^{j} \: + \: s \:CA^{j}\right] \mu_{0}$. Similarly, $\Sigma_{\widehat{y}\rightarrow y}\left(s\right) =\left[\left(1 - s\right) \:I \: + \: s \:\Gamma\left(j\right)\right] \widehat{\Sigma}_{\widehat{y}}(j) \left[\left(1 - s\right) \:I \: + \: s \:\Gamma\left(j\right)\right]$, where $\Gamma(j) = \sqrt{\Sigma_{y}(j)} \left(\sqrt{\Sigma_{y}(j)}\:\widehat{\Sigma}_{\widehat{y}}(j)\:\sqrt{\Sigma_{y}(j)}\right)^{-\frac{1}{2}} \sqrt{\Sigma_{y}(j)}$ (from (\ref{BrenierMapGauss2GaussMatrix})). Substituting the covariance matrix formulae in terms of the respective system matrices, the result follows.
\end{proof}


\section{Conclusions}
In this paper, we argued that many problems in dynamic data driven modeling lead to distributional observation, \emph{either} in natural stochastic sense, \emph{or} in the sense of concentration. We showed how the optimal transport theory can offer a disciplined approach to solve such problems like finite horizon feedback control of PDFs, data-driven reduced-order modeling, and refining a baseline model.


\end{document}